\newtheorem{claim}{Claim}
\newtheorem{definition}{Definition}
\newtheorem{proposition}{Proposition}
\newenvironment{proof}[1][Proof]{\noindent\textbf{#1.} }{\ \rule{0.5em}{0.5em}}
\begin{document}

\title{The Model Selection Curse\thanks{%
This paper is extracted from a significantly longer working paper titled
\textquotedblleft Incentive-Compatible Estimators\textquotedblright\ (Eliaz
and Spiegler (2018)). We thank Susan Athey, Yoav Binyamini, Assaf Cohen,
Rami Atar, Lorens Imhof, Annie Liang, Benny Moldovanu, Ron Peretz and
especially Martin Cripps for helpful conversations. We are also grateful to
seminar and conference audiences at Aarhus, Bocconi, DICE, UCL, Brown, Yale,
BRIQ, Warwick and ESSET, for their useful comments.}}
\author{Kfir Eliaz\thanks{%
School of Economics, Tel-Aviv University and Economics Dept., Columbia
University. E-mail: kfire@post.tau.ac.il. } \ and Ran Spiegler\thanks{%
School of Economics, Tel Aviv University; Department of Economics,
University College London; and CfM. E-mail: rani@post.tau.ac.il.}}
\maketitle

\begin{abstract}
A \textquotedblleft statistician\textquotedblright\ takes an action on
behalf of an agent, based on the agent's self-reported personal data and a
sample involving other people. The action that he takes is an estimated
function of the agent's report. The estimation procedure involves model
selection. We ask the following question: Is truth-telling optimal for the
agent given the statistician's procedure? We analyze this question in the
context of a simple example that highlights the role of model selection. We
suggest that our simple exercise may have implications for the broader issue
of human interaction with \textquotedblleft machine
learning\textquotedblright\ algorithms.\bigskip \bigskip \bigskip \pagebreak
\end{abstract}

\section{Introduction}

In recent years, actions in ever-expanding domains are taken on our behalf
by automated systems that rely on machine-learning tools. Consider the case
of online content provision. A website obtains information about a user's
personal characteristics. Some of these characteristics are actively
provided by the user himself; others are obtained by monitoring his online
navigation history. The website then feeds these characteristics into a
predictive statistical model, which is estimated on a sample consisting of
observations of other users. The estimated model then outputs a prediction
of the user's ideal content. In domains like autonomous driving or medical
decision making, AI systems are mostly confined to issuing recommendations
for a human decision maker. In the future, however, it is possible that
decisions in such domains will be entirely based on machine learning.

How should users interact with such a procedure? In particular, should they
truthfully share personal characteristics with the automatic system? Of
course, in the presence of a conflict of interests between the two parties -
e.g., when the online content provider has a distinct political or
commercial agenda - the user might be better off if he misreports his
characteristics or deletes \textquotedblleft cookies\textquotedblright\ from
his computer. This is a familiar situation of communication under misaligned
preferences, which seems amenable to economists' standard model of strategic
information transmission as a game of incomplete information (with a common
prior).

However, suppose there is no conflict of interests between the two parties -
i.e., the objective behind the machine-learning algorithm is to make the
best prediction of the user's ideal action. But how do such actual systems
perform this prediction task? Consider a very basic textbook tool like LASSO%
\footnote{%
Least Absolute Shrinkage and Selection Operator} (Tibshirani (1996)). This
is a variant on standard linear regression analysis, which adds a cost
function that penalizes non-zero coefficients. The procedure involves both
model selection (i.e. choosing which of many available variables will enter
the regression) and estimation of the selected variables' coefficients. The
predicted action for an agent with a particular vector of personal
characteristics $x$ is the dependent variable's estimated value at $x$. Such
a\ procedure is considered useful when users have many potentially relevant
characteristics relative to the sample size, and especially when we can
expect few of them to be relevant for predicting the agent's ideal action
(i.e., the true data-generating process is \textit{sparse}).

However, LASSO is not a fundamentally Bayesian procedure. Although one can
justify its estimates as properties of a Bayesian posterior derived from
some prior (Tibshirani (1996), Park and Casella (2008), Gao et al. (2015)),
these properties are not necessarily relevant for maximizing the user's
welfare. Furthermore, there is no reason to assume that the prior that
rationalizes LASSO in this manner coincides with the user's actual prior
beliefs (the priors in the above-cited papers involve Laplacian
distributions over parameters). Thus, neither the preferences nor the priors
that take part in the Bayesian foundation for LASSO are necessarily the ones
an economic modeler would like to attribute to the user in a plausible model
of the interaction.

This observation could be extended to many machine-learning predictive
methods that are far more elaborate than the simple textbook example of
penalized regression. If we want to model human interaction with such
algorithms, some departure from the standard Bayesian framework with common
priors seems warranted. Put differently, if one were to analyze a model with
common priors, where a benevolent Bayesian decision maker tries to take the
optimal action for an agent with unknown characteristics, then for almost
all prior beliefs, the decision maker's behavior will not be perfectly
mimicked by a familiar machine-learning procedure. Our approach in this
paper is to take the statistician's procedure as $given$ (rather than trying
to provide a formal rationalization for it) and examine the user's strategic
response to it.

Machine-learning algorithms can be extremely complicated. Nevertheless, in
this paper we follow the tradition of using simple \textquotedblleft
toy\textquotedblright\ models to get insight into complex phenomena.
Economists have developed models in this tradition to study the behavior of
large organizations or the macroeconomy; surely these are more complex than
the most intricate machine-learning algorithm. Accordingly, our model is
perhaps the simplest that can capture the key element we wish to address -
namely, how the element of model selection in machine-learning algorithms
affect users' self-reporting decision.

Specifically, we present a model of an interaction between an
\textquotedblleft agent\textquotedblright\ and a \textquotedblleft
statistician\textquotedblright\ - the latter is a stand-in for an automated
system that obtains personal data from the agent and outputs an action on
his behalf. The agent has a single binary personal characteristic $x$, which
is his private information. The agent has an ideal action, which is a
function of $x$. This function is unknown. The statistician learns about it
by obtaining noisy observations of $other$ agents' ideal actions. This
sample constitutes the statistician's private information. It is $small$,
consisting of $one$ observation for each value of $x$. The statistician
follows a \textquotedblleft penalized regression\textquotedblright\
procedure: the estimated coefficients of his model minimize a combination of
the Residual Sum of Squares and a cost function that combines $L_{0}$ and $%
L_{1}$ penalties of the coefficient of $x$. The procedure's element of model
selection in this simple example consists of the decision whether to admit $x
$ as a predictor of the agent's ideal action.

With one binary characteristic and two sample points, this environment is as
far from \textquotedblleft big data\textquotedblright\ as one could imagine.
Nevertheless, it shares a crucial feature with a typical big-data
predicament that motivates machine-learning methods: the sample size is
roughly the same as the number of potential explanatory variables, such that
an estimation procedure that does not involve selection or shrinkage risks
over-fitting (e.g., see Hastie et al. (2015)). Indeed, an unpenalized
regression would \textit{perfectly} fit the data. As a result, the estimator
would have high variance and its predictive performance could be poor,
relative to an estimator that excludes $x$ or shrinks its coefficient. Thus,
the merit of our simple example is that it manages to capture in a tractable
manner the over-fitting problem.

We pose the following question: Fixing the statistician's procedure and the
agent's prior belief over the true model's parameters, \textit{would the
agent always want to truthfully report his personal characteristics to the
statistician?} When this is the case for all possible priors, we say that
the statistician's procedure (or \textquotedblleft
estimator\textquotedblright ) is \textit{incentive-compatible}. Our analysis
identifies an aspect of the problem that creates a misreporting incentive.
Because the agent's report of $x$ only matters when this variable is
selected by the statistician's procedure, he should only care about the
distribution of the variable's estimated coefficient conditional on the
\textquotedblleft \textit{pivotal event}\textquotedblright\ in which the
variable's coefficient is not zero. One can construct distributions of the
sample noise for which the estimated coefficient conditional on the pivotal
event is so biased that the agent is better off introducing a counter-bias
by misreporting his personal characteristic.

We refer to this effect as the \textquotedblleft \textit{model selection
curse}\textquotedblright . As the term suggests, the logic is reminiscent of
pivotal-reasoning phenomena like the Winner's Curse in auction theory
(Milgrom and Weber (1982)) or the Swing Voter's Curse in the theory of
strategic voting (Feddersen and Pesendorfer (1996)). The model selection
curse does not disappear with large samples: When the noise distribution is
asymmetric, the statistician's procedure can fail incentive-compatibility
even asymptotically. In contrast, we show that when the sample noise is 
\textit{symmetrically} distributed, the estimator is
incentive-compatible.\medskip

\noindent \textit{Related literature}

\noindent Our paper joins a small literature that has begun exploring
incentive issues that emerge in the context of classical-statistics
procedures. Cummings et al. (2015) study agents with privacy concerns who
strategically report their personal data to an analyst who performs a linear
regression. Caragiannis et al. (2016) consider the problem of estimating a
sample mean when the agents who provide the sample observations want to bias
the mean close to their value. Hardt et al. (2016) consider the problem of
designing the most accurate classifier when the input to the classifier is
provided by a strategic agent who faces a cost of lying. Chassang et al.
(2012) argue for a modification of randomized controlled trials when
experimental subjects take unobserved actions that can affect treatment
outcomes. Banerjee et al. (2017) rationalize norms regarding experimental
protocols (especially randomization) by modeling experimenters as
ambiguity-averse decision makers. Spiess (2018) studies the design of
estimation procedures that involve model selection when the statistician and
the social planner have conflicting interests (e.g., when the statistician
has a preference for reporting large effects).

\section{A Model}

An \textit{agent }has a privately known, binary personal characteristic $%
x\in \{0,1\}$. In the context of medical decision making, $x$ can represent
a risk factor (e.g. smoking). In the context of online content provision, it
can indicate whether the agent visited a particular website. A \textit{%
statistician} must take an action $a\in 
%TCIMACRO{\U{211d} }%
%BeginExpansion
\mathbb{R}
%EndExpansion
$ on the agent's behalf. The agent's payoff from action $a$ is $%
-(a-f(x))^{2} $, where $f(x)\in 
%TCIMACRO{\U{211d} }%
%BeginExpansion
\mathbb{R}
%EndExpansion
$ is the agent's ideal action as a function of $x$. It will be convenient to
write $f(0)=\beta _{0}$ and $f(1)=\beta _{0}+\beta _{1}$, such that $\beta
_{1}$ captures the effect of $x$ on the agent's ideal action. The parameter
profile $\beta =(\beta _{0},\beta _{1})$ is unknown.

Before taking an action, the statistician privately observes a noisy signal
about $f$. Specifically, for each $x=0,1$, he obtains a \textit{single}
observation $y_{x}=f(x)+\varepsilon _{x}$, where $\varepsilon _{0}$ and $%
\varepsilon _{1}$ are drawn $i.i.d$ from some distribution with zero mean.
Denote $\varepsilon =(\varepsilon _{0},\varepsilon _{1})$. The observations
do not involve the agent himself. We have thus described an environment with
two-sided private information: the agent privately knows $x$, whereas the
statistician has private access to the sample $(y_{0},y_{1})$.

Equipped with the sample $(y_{0},y_{1})$, the statistician follows a
\textquotedblleft penalized regression\textquotedblright\ procedure for
estimating $\beta $. That is, he solves the following minimization problem,%
\begin{equation}
\min_{b_{0},b_{1}}\qquad \sum_{x=0,1}(y_{x}-b_{0}-b_{1}x)^{2}+C(b_{1})
\label{univariate_problem}
\end{equation}%
The first term is the standard Residual Sum of Squares, whereas the second
term is a cost associated with $b_{1}$; the intercept $b_{0}$ entails no
cost. (Of course, given our simple set-up, referring to the procedure as a
\textquotedblleft penalized regression\textquotedblright\ is a bit of an
exaggeration.) The solution to (\ref{univariate_problem}) is denoted $%
b(\varepsilon ,\beta \mathbf{)=(}b_{0}(\varepsilon ,\beta
),b_{1}(\varepsilon ,\beta ))$. We refer to $(b(\varepsilon ,\beta \mathbf{)}%
)_{\varepsilon }$ as the \textit{estimator}. The dependence on $(\varepsilon
,\beta \mathbf{)}$ follows from the fact that the estimator is a function of 
$(y_{0},y_{1})$, which in turn is determined by $(\varepsilon ,\beta \mathbf{%
)}$\textbf{.}

We assume the penalty function%
\begin{equation*}
C(b_{1})=c_{0}\mathbf{1}_{b_{1}\neq 0}+c_{1}|b_{1}|
\end{equation*}%
where $c_{0},c_{1}\geq 0$. This is a linear combination of two common
forms:\ a fixed cost for the mere inclusion of a non-zero coefficient ($L_{0}
$ penalty) and a cost for the magnitude of the coefficient in absolute value
($L_{1}$ penalty, or LASSO).\footnote{%
Adding an $L_{2}$ (Ridge) term $c_{2}(b_{1})^{2}$ would not change any of
the results in the paper.} Assume that when the statistician is indifferent
between including and excluding $x$, he includes it.

In the absence of the penalty $C$, the solution to (\ref{univariate_problem}%
) is $b_{0}=y_{0}$, $b_{1}=y_{1}-y_{0}$, such that the Residual Sum of
Squares is zero. In other words, the estimator \textit{perfectly} fits the
data. As a result, the estimator's predictive performance will tend to be
poor - relative to an estimator that sets $b_{0}=\frac{1}{2}(y_{0}+y_{1})$, $%
b_{1}=0$ - when the true value of $\beta _{1}$ is relatively small.

Having estimated $f$, the statistician receives a report $r\in X$ from the
agent. The statistician then takes the action $a=b_{0}+b_{1}r$. The agent's
expected payoff for a given $\beta $ is therefore%
\begin{equation}
-\mathbb{E}_{\varepsilon }\left[ \left( b_{0}(\varepsilon ,\beta
)+b_{1}(\varepsilon ,\beta )r-\beta _{0}-\beta _{1}x\right) \right]
^{2}\medskip  \label{agentutility}
\end{equation}

\noindent This expression can also be written as%
\begin{equation*}
-\mathbb{E}_{\varepsilon }[\hat{f}(r)-f(x)]^{2}
\end{equation*}%
where $\hat{f}(r)=b_{0}(\varepsilon ,\beta )+b_{1}(\varepsilon ,\beta )r$ is
the estimated model's value at the agent's self-report $r$.

Note that the agent's preferences are given by a quadratic loss function.
This is also a standard criterion for evaluating estimators' predictive
success. Suppose that $r=x$ - i.e., the agent submits a truthful report of
his personal characteristic. Then, the agent's expected payoff coincides
with the estimator's mean squared error.

The following are the key definitions of this paper.$\medskip $

\begin{definition}
The estimator is \textbf{incentive compatible at a given prior belief} over
the true model $f$ (i.e. the parameters $\beta $) if the agent is weakly
better off truthfully reporting his personal characteristic, given his
prior. That is,%
\begin{equation*}
\mathbb{E}_{\beta }\mathbb{E}_{\varepsilon }\left[ \hat{f}(x)-f(x)\right]
^{2}\leq \mathbb{E}_{\beta }\mathbb{E}_{\varepsilon }\left[ \hat{f}(r)-f(x)%
\right] ^{2}
\end{equation*}%
for every $x,r\in \{0,1\}$.\medskip
\end{definition}

In this definition, the expectation operator $\mathbb{E}_{\varepsilon }$ is
taken with respect to the given exogenous distribution over the noise
realization profile. The expectation operator $\mathbb{E}_{\beta }$ is taken
with respect to the agent's prior belief over $\beta $.\medskip

\begin{definition}
The estimator is \textbf{incentive compatible }if it is incentive compatible
at every prior belief. Equivalently,%
\begin{equation}
\mathbb{E}_{\varepsilon }\left[ \hat{f}(x)-f(x)\right] ^{2}\leq \mathbb{E}%
_{\varepsilon }\left[ \hat{f}(r)-f(x)\right] ^{2}  \label{IC_multi}
\end{equation}%
for every true model $f$ and every $x,r\in \{0,1\}$.$\medskip $
\end{definition}

Incentive-compatibility means that the agent is unable to perform better by
misreporting his personal characteristic, \textit{regardless} of his beliefs
over the true model's parameters. How should we interpret this requirement,
given that we do not necessarily want to think of the agent as being
sophisticated enough to think in these terms? One interpretation is that
lack of incentive-compatibility is a purely \textit{normative} statement
about the agent's welfare - namely, given how the statistician takes actions
on the agent's behalf, it would be advisable for the agent to misreport.
Furthermore, there are opportunities for new firms to enter and offer the
agent paid advice for how to manipulate the procedure - in analogy to the
industry of \textquotedblleft search engine optimization\textquotedblright .
Incentive-compatibility theoretically eliminates the need for such an
industry. In the context of online content provision, deviating from $x=1$
to $r=0$ can be interpreted as \textquotedblleft deleting a
cookie\textquotedblright . This deviation is straightforward to implement,
and the agent can check if it leads to better content match in the long run.

The agent's expected payoff function is known to be decomposable into two
terms, one capturing the bias of estimator and another its variance.
Comparing the predictive success of different estimators thus boils down to
trading off the estimators' bias and variance. Incentive-compatibility can
thus be viewed as a collection of bias-variance comparisons between two
estimators: one is the statistician's estimator, and another is an estimator
that applies the statistician's procedure to $r$ rather than $x$. The latter
is not an estimation method that a real-life statistician is likely to
propose, but it arises naturally in our setting.

\section{Analysis}

We first derive a complete characterization of the estimator.\medskip 

\begin{proposition}
\label{prop_estimator}The solution to the statistician's minimization
problem (\ref{univariate_problem}) is as follows:%
\begin{equation}
b_{1}(\varepsilon ,\beta )=\left\{ 
\begin{array}{ccc}
\beta _{1}+\varepsilon _{1}-\varepsilon _{0}-c_{1} & if & \beta
_{1}+\varepsilon _{1}-\varepsilon _{0}-\sqrt{(c_{1})^{2}+2c_{0}}\geq 0 \\ 
\beta _{1}+\varepsilon _{1}-\varepsilon _{0}+c_{1} & if & \beta
_{1}+\varepsilon _{1}-\varepsilon _{0}+\sqrt{(c_{1})^{2}+2c_{0}}\leq 0 \\ 
0 & otherwise & 
\end{array}%
\right.   \label{b(1) single}
\end{equation}%
and%
\begin{equation*}
b_{0}(\varepsilon ,\beta )=\frac{1}{2}\left[ y_{0}+y_{1}-b_{1}(\varepsilon
,\beta )\right] \medskip 
\end{equation*}
\end{proposition}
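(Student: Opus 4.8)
The plan is to solve the unconstrained inner problem in the intercept first and then reduce everything to a one-dimensional problem in $b_1$. Observe that the penalty $C(b_1)$ does not involve $b_0$, so for any fixed $b_1$ the objective is minimized in $b_0$ by the ordinary least-squares condition. Setting the derivative of $\sum_{x=0,1}(y_x - b_0 - b_1 x)^2$ with respect to $b_0$ to zero gives $(y_0 - b_0) + (y_1 - b_0 - b_1) = 0$, hence $b_0 = \tfrac12(y_0 + y_1 - b_1)$. This already establishes the claimed intercept formula, and crucially it holds regardless of $b_1$ and of the penalty constants, so I can substitute it back once and for all.

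Substituting this $b_0$ into the residual sum of squares, the two residuals become $\tfrac12(b_1 - \delta)$ and $-\tfrac12(b_1 - \delta)$, where I write $\delta := y_1 - y_0 = \beta_1 + \varepsilon_1 - \varepsilon_0$. Hence the residual sum of squares equals $\tfrac12(b_1 - \delta)^2$, and the problem collapses to minimizing $G(b_1) := \tfrac12(b_1 - \delta)^2 + c_0\mathbf{1}_{b_1 \neq 0} + c_1|b_1|$ over $b_1 \in \mathbb{R}$. The difficulty is that the $L_0$ term makes $G$ both non-convex and discontinuous at the origin, so a single first-order condition will not locate the minimum; I must instead compare the best nonzero value against the value at exactly zero.

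Accordingly, I would minimize $G$ separately on each open half-line, where it is smooth and strictly convex, and then compare with $G(0) = \tfrac12\delta^2$. On $b_1>0$ the first-order condition $b_1 - \delta + c_1 = 0$ gives the soft-thresholded candidate $b_1 = \delta - c_1$, feasible precisely when $\delta > c_1$; symmetrically, on $b_1<0$ the candidate is $b_1 = \delta + c_1$, feasible when $\delta < -c_1$. Plugging $b_1 = \delta - c_1$ back yields objective value $c_0 + c_1\delta - \tfrac12 c_1^2$, so including $x$ on the positive side is optimal exactly when this is at most $\tfrac12\delta^2$. The main content of the proof is this comparison: rearranging the inequality and completing the square reduces it to $\tfrac12(\delta - c_1)^2 \ge c_0$, the clean threshold of square-root form that separates the region where a nonzero coefficient wins from the region where exclusion ($b_1=0$) wins. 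The negative branch is identical by symmetry, and the stated tie-breaking convention (include when indifferent) fixes the weak inequalities in the three cases.

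The only genuinely delicate step is the value comparison in the previous paragraph, precisely because the $L_0$ jump means that a feasible interior stationary point of a branch can still be dominated by exact exclusion, so the answer is not delivered by calculus alone. I therefore expect the threshold — rather than the first-order conditions — to be where all the work sits, and I would verify that algebra carefully; everything else (the intercept formula, the soft-thresholding, and translating $\delta$ back to $\beta_1 + \varepsilon_1 - \varepsilon_0$) is routine bookkeeping.
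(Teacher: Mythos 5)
Your reduction is essentially the paper's own argument: eliminate $b_0$ via its first-order condition, collapse the problem to minimizing $G(b_1)=\tfrac12(b_1-\delta)^2+c_0\mathbf{1}_{b_1\neq 0}+c_1|b_1|$ with $\delta=y_1-y_0=\beta_1+\varepsilon_1-\varepsilon_0$, soft-threshold on each open half-line, and decide inclusion by comparing the best nonzero value with $G(0)=\tfrac12\delta^2$. You are also right that the value comparison is where all the work sits. The problem is that the threshold you correctly derive, $\tfrac12(\delta-c_1)^2\geq c_0$, i.e. $|\delta|\geq c_1+\sqrt{2c_0}$, is \emph{not} the threshold $|\delta|\geq\sqrt{(c_1)^2+2c_0}$ asserted in the proposition: since $(c_1+\sqrt{2c_0})^2=(c_1)^2+2c_0+2c_1\sqrt{2c_0}$, the two coincide only when $c_0=0$ or $c_1=0$. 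You present your inequality as though it were the stated one, so as written the proposal does not prove the proposition --- it proves a different (and, as it happens, correct) selection rule, and the discrepancy goes unremarked.

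To see that the two rules genuinely disagree, take $c_0=c_1=2$ and $\delta=3$. The stated cutoff is $\sqrt{8}\approx 2.83<3$, so the proposition prescribes $b_1=\delta-c_1=1$, whose total objective is $\tfrac12(3-1)^2+2+2\cdot 1=6$; but $b_1=0$ yields $\tfrac12\cdot 3^2=4.5<6$, so exclusion wins and the displayed formula is wrong when both penalties are active. The source of the discrepancy is in the paper's appendix: there the inclusion condition is taken to be $RSS(b_1=0)-RSS(b_1\neq 0)\geq c_0$, which omits the $c_1|b_1|$ part of the penalty from the comparison of total objectives; restoring that term gives exactly your $\tfrac12(\delta-c_1)^2\geq c_0$. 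So the gap in your write-up is not one of computation but of reconciliation: you need to flag that your threshold contradicts the statement you set out to prove, rather than silently identifying the two. Everything else --- the intercept formula, the soft-thresholding on each branch, the observation that the $L_0$ jump forces a value comparison rather than a first-order condition, and the tie-breaking remark --- is correct and matches the paper's route.
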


The proof is mechanical and relegated to the supplementary appendix. Note
that $L_{0}$ penalty leads to model selection without affecting the value of 
$b_{1}$ conditional on being non-zero. The $L_{1}$ penalty term leads to
both shrinkage and selection.

Let us now turn to incentive-compatibility. Two factors create a problem in
this regard: sample noise and model selection. Neither factor is problematic
on its own, as the following pair of observations establishes.\medskip 

\begin{claim}
\label{benchmark_nosamplingnoise}Suppose that $\varepsilon =(0,0)$ with
probability one. Then, the estimator is incentive compatible.
\end{claim}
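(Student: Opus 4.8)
The plan is to exploit the fact that with no sampling noise, the estimator becomes a deterministic function of $\beta$ alone, so the expectation operator $\mathbb{E}_{\varepsilon}$ in the incentive-compatibility condition (\ref{IC_multi}) drops out and I am left with a pointwise comparison of squared errors. Setting $\varepsilon=(0,0)$ in Proposition \ref{prop_estimator}, the quantity $\beta_1+\varepsilon_1-\varepsilon_0$ reduces to $\beta_1$, so $b_1(\varepsilon,\beta)$ becomes a threshold function of $\beta_1$ itself: it equals $\beta_1-c_1$ when $\beta_1\geq\sqrt{c_1^2+2c_0}$, equals $\beta_1+c_1$ when $\beta_1\leq-\sqrt{c_1^2+2c_0}$, and equals $0$ on the intermediate region. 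Since there is no noise, $y_x=f(x)=\beta_0+\beta_1 x$, and thus $b_0=\frac{1}{2}(y_0+y_1-b_1)=\beta_0+\frac{1}{2}\beta_1-\frac{1}{2}b_1$.

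Next I would compute the estimated model $\hat{f}(r)=b_0+b_1 r$ explicitly and compare $[\hat{f}(x)-f(x)]^2$ against $[\hat{f}(r)-f(x)]^2$ for $x,r\in\{0,1\}$. The key algebraic observation is that the estimation error at a report $r$ can be written as a term that is affine in $r$: specifically $\hat{f}(r)-f(x) = (b_0-\beta_0)+(b_1-\beta_1)r + \beta_1(r-x)$, and substituting $b_0$ gives $\hat{f}(r)-f(x)=(b_1-\beta_1)\left(r-\tfrac{1}{2}\right)+\beta_1(r-x)$. The heart of the matter is to show that among $r\in\{0,1\}$, the truthful report $r=x$ minimizes $|\hat{f}(r)-f(x)|$. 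I would handle the case $x=0$ and $x=1$ separately (they are symmetric), and within each, split according to the three regimes of $b_1$ determined by where $\beta_1$ falls relative to the thresholds $\pm\sqrt{c_1^2+2c_0}$.

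The substantive content, and the step I expect to be the main obstacle, is verifying the inequality in the shrinkage regime where $b_1\neq 0$ but $c_1>0$. In the selection-only region ($b_1=0$) the estimate is simply $\hat{f}(r)=\beta_0+\frac{1}{2}\beta_1$ independent of $r$, so truth-telling and misreporting yield identical payoffs and the inequality holds with equality; that case is immediate. When $b_1=\beta_1-c_1$ (taking $\beta_1\geq\sqrt{c_1^2+2c_0}>0$), the estimator understates the true slope by exactly $c_1$, so $\hat f$ is a correctly-centered but flattened version of $f$. I would show that this downward shrinkage of the slope never makes a false report beneficial: intuitively, reporting $r\neq x$ moves the prediction further along an already-attenuated line, and the direction of the induced error reinforces rather than cancels the existing shrinkage bias. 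Concretely I expect to reduce each case to a comparison of the form $|(b_1-\beta_1)(r-\tfrac12)+\beta_1(r-x)|$ over $r\in\{0,1\}$, where the sign structure forces $r=x$ to be optimal. The decisive point is that with no noise there is no randomness to create the adverse conditioning on the pivotal event that drives the model-selection curse later in the paper, so the shrinkage bias is symmetric across the two values of $x$ and can never be profitably counteracted by misreporting.
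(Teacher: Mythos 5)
Your proposal is correct and follows essentially the same route as the paper: with $\varepsilon=(0,0)$ the estimator is deterministic, the $b_1=0$ case is trivial, and in the $b_1\neq 0$ case the condition $\beta_1\geq\sqrt{(c_1)^2+2c_0}\geq c_1$ (resp.\ its mirror image for $\beta_1<0$) guarantees that the truthful report yields the action closer to the agent's ideal point. Your affine decomposition $\hat{f}(r)-f(x)=(b_1-\beta_1)(r-\tfrac12)+\beta_1(r-x)$ is just a repackaging of the paper's direct comparison of $b_0$ and $b_0+b_1$ to $\beta_0$ and $\beta_0+\beta_1$, and the final inequality $|\beta_1-\tfrac{c_1}{2}|\geq\tfrac{c_1}{2}$ it reduces to is exactly the one the paper verifies.
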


\begin{proof}
Suppose that $\beta _{1}$ is such that $b_{1}=0$. Then, the agent's report
has no effect on the statistician's action, and incentive-compatibility
holds trivially. Now suppose $\beta _{1}$ is such that $b_{1}>0$. Given the
characterization of $b_{1}$, we must have $\beta _{1}-c_{1}\geq 0$. The
statistician's action as a function of the agent's report is $b_{0}$ if $r=0$
and $b_{0}+b_{1}$ if $r=1$, where%
\begin{eqnarray*}
b_{0} &=&\beta _{0}+\frac{1}{2}\beta _{1}-\frac{1}{2}b_{1}=\beta _{0}+\frac{1%
}{2}\beta _{1}-\frac{1}{2}(\beta _{1}-c_{1}) \\
b_{0}+b_{1} &=&\beta _{0}+\frac{1}{2}\beta _{1}-\frac{1}{2}b_{1}+b_{1}=\beta
_{0}+\frac{1}{2}\beta _{1}+\frac{1}{2}(\beta _{1}-c_{1})
\end{eqnarray*}%
When $x=0$ ($x=1$), the agent's ideal action is $\beta _{0}$ ($\beta
_{0}+\beta _{1}$), and since $\beta _{1}-c_{1}\geq 0$, the action $b_{0}$ ($%
b_{0}+b_{1}$) is closer to the ideal point than $b_{0}+b_{1}$ ($b_{0}$).
Thus, honesty is optimal for the agent. A similar calculation establishes
incentive-compatibility when $b_{1}<0$.\medskip 
\end{proof}

\begin{claim}
\label{Prop OLS single}If $c_{0}=c_{1}=0$, then the estimator is
incentive-compatible.
\end{claim}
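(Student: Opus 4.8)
The plan is to use the fact that setting $c_{0}=c_{1}=0$ turns the penalized regression into ordinary least squares, which fits the two sample points exactly. Substituting $c_{0}=c_{1}=0$ into Proposition \ref{prop_estimator} (so that $\sqrt{(c_{1})^{2}+2c_{0}}=0$ and the two non-zero branches coincide) gives $b_{1}(\varepsilon ,\beta )=\beta _{1}+\varepsilon _{1}-\varepsilon _{0}=y_{1}-y_{0}$ and $b_{0}(\varepsilon ,\beta )=y_{0}$, as already noted in the text. Hence the fitted values are $\hat{f}(0)=y_{0}=\beta _{0}+\varepsilon _{0}$ and $\hat{f}(1)=y_{1}=\beta _{0}+\beta _{1}+\varepsilon _{1}$. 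The first thing I would record is that each fitted value is unbiased for the corresponding true value, $\mathbb{E}_{\varepsilon }[\hat{f}(r)]=f(r)$, and---crucially---that $\hat{f}(r)$ depends on the noise only through its own component $\varepsilon _{r}$, so its variance equals the common noise variance $v:=\mathbb{E}_{\varepsilon }[\varepsilon _{0}^{2}]=\mathbb{E}_{\varepsilon }[\varepsilon _{1}^{2}]$ regardless of $r$.

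Next I would feed these two facts into the bias--variance decomposition of the loss appearing in (\ref{IC_multi}). Since $\hat{f}(r)$ has mean $f(r)$ and variance $v$,
\begin{equation*}
\mathbb{E}_{\varepsilon }[\hat{f}(r)-f(x)]^{2}=\mathrm{Var}(\hat{f}(r))+\left( \mathbb{E}_{\varepsilon }[\hat{f}(r)]-f(x)\right) ^{2}=v+\left( f(r)-f(x)\right) ^{2}.
\end{equation*}
The variance term is the same constant $v$ for both reports, so the comparison between truthful and untruthful reporting is governed entirely by the squared bias $(f(r)-f(x))^{2}$. This term equals $0$ when $r=x$ and $\beta _{1}^{2}\geq 0$ when $r\neq x$, so truthful reporting minimizes the expected loss for each $x$---which is exactly the inequality (\ref{IC_multi})---and the claim follows for every $\beta $ and every $x,r\in \{0,1\}$.

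I do not expect a genuine obstacle here: once the OLS fitted values are written down, the claim reduces to the one-line inequality $v\leq v+\beta _{1}^{2}$. The only point that deserves a moment's care---and the reason OLS behaves so differently from the penalized procedure that produces the model selection curse---is the variance-invariance observation: under OLS the two fitted values are driven by independent noise components, so switching the report from $x$ to $1-x$ replaces one noise term by an equally variable one while merely adding the deterministic gap $f(r)-f(x)$ to the bias. This is where the i.i.d., zero-mean assumption on $(\varepsilon _{0},\varepsilon _{1})$ enters, and it is precisely this clean separation of bias and variance that model selection later destroys.
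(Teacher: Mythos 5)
Your proof is correct. It takes a mildly different route from the paper's: the paper verifies the inequality (\ref{IC_multi}) by direct expansion, reducing the deviation from $x=1$ to $r=0$ to the condition $\mathbb{E}_{\varepsilon}[-(\beta_{1})^{2}+2\beta_{1}\varepsilon_{0}+(\varepsilon_{1})^{2}-(\varepsilon_{0})^{2}]\leq 0$, which holds because the noise terms are i.i.d.\ with mean zero; you instead observe that under OLS the fitted value satisfies $\hat{f}(r)=f(r)+\varepsilon_{r}$, apply the bias--variance decomposition to each report separately, and note that the variance term is report-invariant. The two arguments are algebraically the same computation regrouped, but your packaging has a genuine expository payoff: it isolates \emph{two} facts that must both hold --- unbiasedness of $\hat{f}(r)$ for $f(r)$, and the equality of $\mathrm{Var}(\hat{f}(0))$ and $\mathrm{Var}(\hat{f}(1))$ --- and makes clear that the second is doing real work. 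This is exactly the point the paper makes only in its informal discussion after the claim, where it warns that attributing the result to unbiasedness alone is misleading and that the claim fails when the statistician has unequal numbers of observations for $x=0$ and $x=1$ (which breaks variance-invariance while preserving unbiasedness). Your version makes that failure mode visible inside the proof itself. One small caveat: your argument implicitly assumes the noise has finite variance $v$, but this is already needed for the agent's expected payoff to be well defined, so nothing is lost.
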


\begin{proof}
When $c_{0}=c_{1}=0$, we have $b_{1}=(\beta _{1}+\varepsilon
_{1}-\varepsilon _{0})$. Suppose $x=1$ and the agent contemplates whether to
report $r=0.$ In this case inequality (\ref{IC_multi}) can be simplified into%
\begin{equation*}
\mathbb{E}_{\varepsilon }[(b_{1}(\varepsilon ,\beta
))^{2}+2b_{1}(\varepsilon ,\beta )\cdot (b_{0}(\varepsilon ,\beta )-\beta
_{0}-\beta _{1})]\leq 0
\end{equation*}%
Plugging in the expressions for $b_{0}(\varepsilon ,\beta )$ and $%
b_{1}(\varepsilon ,\beta )\ $given by (\ref{b(1) single}), this inequality
reduces to%
\begin{equation}
\mathbb{E}_{\bar{\varepsilon}_{0},\bar{\varepsilon}_{1}}[-(\beta
_{1})^{2}+2\beta _{1}\varepsilon _{0}+(\varepsilon _{1})^{2}-(\varepsilon
_{0})^{2}]\leq 0  \label{IC x=1 to x=0}
\end{equation}%
This inequality holds for all $\beta _{1}$ because $\varepsilon _{0}$ and $%
\varepsilon _{1}$ are $i.i.d$ with mean zero. An analogous argument shows
that an agent with $x=0$ will not benefit from reporting $r=1$.\medskip 
\end{proof}

Thus, sampling noise and model selection are both necessary to produce
violations of incentive-compatibility in our simple set-up. This finding
should not be taken for granted. First, even in the absence of sampling
noise, the penalty $C$ creates a wedge between the statistician's objective
function and the agent's utility. Therefore, it is not obvious a priori that
this de-facto conflict of interest does not give the agent an incentive to
misreport. Second, as long as the agent's prior over $\beta _{1}$ is not
diffuse, the zero-penalty estimator does not produce actions that maximize
his subjective expected utility. This, too, creates a de-facto conflict of
interests between the two parties, which nevertheless does not give the
agent a sufficient incentive to misreport. One might think that the \textit{%
unbiasedness} of the zero-penalty estimator explains Claim \ref{Prop OLS
single}. However, this intuition is misleading because the agent's utility
function involves a \textit{bias-variance} trade-off. As a result, Claim \ref%
{Prop OLS single} breaks down when the statistician draws different numbers
of observations for $x=0$ and $x=1$: the agent may be willing to experience
a biased action due to misreporting because it will reduce its variance.

We now turn to the case of noisy measurement and positive penalties. The
following example illustrates how incentive-compatibility can fail in this
case.\medskip 

\noindent \textit{An example: Bernoulli noise}

\noindent Assume the following noise distribution. For each $x$:%
\begin{equation*}
\varepsilon _{x}=\left\{ 
\begin{array}{cc}
-1 & \text{with probability }p \\ 
&  \\ 
d=p/(1-p) & \text{with probability }1-p%
\end{array}%
\right. 
\end{equation*}%
where $p>\frac{1}{2}$, such that $d>1$. For simplicity, we consider only the 
$L_{0}$ penalty ($c_{1}=0$).

Consider an agent with $x=1$ who reports $r=0$. This misreporting violates
incentive-compatibility if there is some $\beta _{1}$ for which%
\begin{equation*}
\mathbb{E}_{\varepsilon }\left[ b_{0}(\varepsilon ,\beta )+b_{1}(\varepsilon
,\beta )-\beta _{0}-\beta _{1}\right] ^{2}>\mathbb{E}_{\varepsilon }\left[
b_{0}(\varepsilon ,\beta )-\beta _{0}-\beta _{1}\right] ^{2}
\end{equation*}%
Because the agent's misrepresentation only matters in the \textquotedblleft
pivotal event\textquotedblright\ in which $b_{1}(\varepsilon ,\beta )\neq 0$%
, this inequality can be rewritten as 
\begin{equation}
\mathbb{E}_{\varepsilon _{0},\varepsilon _{1}}[-(\beta _{1})^{2}+2\beta
_{1}\varepsilon _{0}+(\varepsilon _{1})^{2}-(\varepsilon _{0})^{2}\mid
(\beta _{1}+\varepsilon _{1}-\varepsilon _{1})^{2}\geq 2c_{0}]>0
\label{bernoulli_IC}
\end{equation}%
For every $\beta _{1}>0$, we can find a range of values for $c_{0}$ such
that $(\beta _{1}+\varepsilon _{1}-\varepsilon _{0})^{2}\geq 2c_{0}$ only
when $\varepsilon _{1}=d$ and $\varepsilon _{0}=-1.$ In this case, (\ref%
{bernoulli_IC}) is reduced to $\beta _{1}<d-1$.

Therefore, every pair of positive numbers $(\beta _{1},c_{0})$ that
satisfies the inequalities 
\begin{eqnarray*}
-(d+1) &<&\sqrt{2c_{0}}-\beta _{1}<d+1 \\
\beta _{1} &<&d-1
\end{eqnarray*}%
will violate incentive-compatibility. The intuition for this effect is as
follows. The pivotal event $\{\varepsilon \mid b_{1}(\varepsilon ,\beta
)\neq 0\}$ is determined by the difference $\varepsilon _{1}-\varepsilon
_{0}.$ For a range of $(\beta _{1},c_{0})$ values, $\varepsilon
_{1}-\varepsilon _{0}=d+1$ with probability one conditional on the pivotal
event. This produces such a biased estimate of $b_{1}$ that the agent wants
to introduce an offsetting bias in the opposite direction by reporting $x=0.$
$\square $\medskip

The example illustrates a feature we refer to as the \textquotedblleft 
\textit{model selection curse}\textquotedblright , in the spirit of the
\textquotedblleft winner's curse\textquotedblright\ and \textquotedblleft
swing voter's curse\textquotedblright . Like these familiar phenomena, the
model selection curse involves statistical inferences from a
\textquotedblleft pivotal event\textquotedblright . Here, the pivotal event
is the inclusion of an explanatory variable in the statistician's predictive
model. The agent's decision whether to misreport his personal characteristic
is relevant only if the statistician's model includes it. Misreporting will
change the statistician's action by $b_{1}(\varepsilon ,\beta )(r-x)$.
Therefore, the agent only cares about the distribution of $b_{1}(\varepsilon
,\beta )$ conditional on the event $\{\varepsilon \mid b_{1}(\varepsilon
,\beta )\neq 0\}$. This distribution can be so skewed that the agent will
prefer to introduce a counter-bias by misreporting.

A key feature of the above example is the asymmetry in the noise
distribution. Our next result shows that this is a crucial feature:
Symmetric noise ensures incentive-compatibility of the statistician's
procedure. For convenience, we consider the case in which the distribution
of $\varepsilon _{x}$ is described by a well-defined density function. The
result is stated for arbitrary $c_{0},c_{1}\geq 0$.\medskip 

\begin{proposition}
\label{Prop IC 1 variable symmetry}If $\varepsilon _{x}$ is symmetrically
distributed around zero, then the estimator is incentive-compatible.
\end{proposition}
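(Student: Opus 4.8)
The plan is to reduce incentive-compatibility to a single scalar inequality and then exploit two symmetries of the noise. First I would dispose of the trivial cases $r=x$, where (\ref{IC_multi}) holds with equality, and focus on the two genuine deviations $(x,r)=(1,0)$ and $(x,r)=(0,1)$. Write $t=\varepsilon_0+\varepsilon_1$ and $s=\varepsilon_1-\varepsilon_0$. By Proposition \ref{prop_estimator}, $b_1(\varepsilon,\beta)$ depends on the noise only through $s$, while a direct substitution of $y_0,y_1$ gives $b_0(\varepsilon,\beta)=\beta_0+\tfrac12\beta_1+\tfrac12 t-\tfrac12 b_1$. Plugging these into the difference of the two sides of (\ref{IC_multi}) and expanding the square, the terms involving $b_0$ collapse, and each deviation reduces to an inequality of the form
\begin{equation*}
\beta_1\,\mathbb{E}_\varepsilon[b_1(\varepsilon,\beta)]\ \pm\ \mathbb{E}_\varepsilon[b_1(\varepsilon,\beta)\,t]\ \ge\ 0 ,
\end{equation*}
the sign of the cross term depending on the direction of the misreport. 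So it suffices to show that the cross term vanishes and that $\beta_1\,\mathbb{E}_\varepsilon[b_1(\varepsilon,\beta)]\ge 0$.

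The crux is killing the cross term $\mathbb{E}[b_1 t]$. Here I would use the map $\phi:(\varepsilon_0,\varepsilon_1)\mapsto(-\varepsilon_1,-\varepsilon_0)$, the composition of the exchange $(\varepsilon_0,\varepsilon_1)\mapsto(\varepsilon_1,\varepsilon_0)$ and the sign flip $(\varepsilon_0,\varepsilon_1)\mapsto(-\varepsilon_0,-\varepsilon_1)$. Because the $\varepsilon_x$ are i.i.d. and symmetric about zero, the joint law is invariant under each of these two maps, hence under $\phi$. The point is that $\phi$ leaves $s$ unchanged while sending $t\mapsto -t$. Consequently the conditional law of $t$ given $s$ is symmetric about zero, so $\mathbb{E}[t\mid s]=0$; since $b_1$ is a function of $s$ alone, $\mathbb{E}[b_1 t]=\mathbb{E}[b_1\,\mathbb{E}(t\mid s)]=0$. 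This is precisely the step that fails for the asymmetric Bernoulli noise above, where conditioning on the pivotal value of $s$ pins down $t$ at a nonzero value.

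For the remaining inequality I would write $b_1(\varepsilon,\beta)=\psi(\beta_1+s)$, where $\psi$ is the thresholding map read off from (\ref{b(1) single}), with threshold $\kappa=\sqrt{c_1^2+2c_0}\ge c_1$. One checks directly that $\psi$ is odd and non-decreasing (the inclusion $\kappa\ge c_1$ guarantees that the two active branches do not overshoot past zero). Setting $h(\beta_1)=\mathbb{E}[\psi(\beta_1+s)]$, the symmetry of the density of $s$ combined with the oddness of $\psi$ yields $h(-\beta_1)=-h(\beta_1)$, so $h(0)=0$; and since $\psi$ is non-decreasing, $\beta_1\mapsto h(\beta_1)$ is non-decreasing. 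A non-decreasing odd function has the same sign as its argument, so $\beta_1 h(\beta_1)\ge 0$, which is exactly $\beta_1\,\mathbb{E}_\varepsilon[b_1]\ge 0$. Combined with the vanishing cross term, this establishes the displayed inequality for both deviations.

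I expect the only genuinely delicate point to be the vanishing of the cross term: one must notice that neither the exchange symmetry nor the sign symmetry alone fixes $s$, but their composition $\phi$ does while reversing $t$, which is what forces $\mathbb{E}[t\mid s]=0$. Everything else—the algebraic reduction to a scalar inequality and the oddness/monotonicity of the thresholding map—is routine.
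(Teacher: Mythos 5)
Your proposal is correct, and its skeleton coincides with the paper's: both reduce each deviation to $\beta_1\,\mathbb{E}[b_1]\mp\mathbb{E}[b_1(\varepsilon_0+\varepsilon_1)]\ge 0$ after substituting $b_0=\frac{1}{2}(y_0+y_1-b_1)$, and both kill the cross term by the map $(u,v)\mapsto(-v,-u)$, which preserves $\varepsilon_1-\varepsilon_0$ (hence $b_1$), reverses $\varepsilon_0+\varepsilon_1$, and preserves the joint law because the noise is i.i.d.\ and symmetric --- this is exactly the paper's observation that $(u,v)\in\mathcal{E}(b_1^\ast)$ implies $(-v,-u)\in\mathcal{E}(b_1^\ast)$ with equal probability. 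Where you genuinely depart is the residual inequality $\beta_1\,\mathbb{E}[b_1]\ge 0$: the paper proves it by an explicit computation with the density $g$ of $\Delta=\varepsilon_1-\varepsilon_0$, splitting the integral at the selection thresholds and applying integration by parts to reach $\int_{-\infty}^{s}G-\int_{-\infty}^{-t}G\ge 0$. You instead observe that $b_1=\psi(\beta_1+\Delta)$ for the soft/hard-thresholding map $\psi$, which is odd and non-decreasing precisely because $\sqrt{c_1^2+2c_0}\ge c_1$, so that $h(\beta_1)=\mathbb{E}[\psi(\beta_1+\Delta)]$ is an odd non-decreasing function of $\beta_1$ and therefore has the sign of $\beta_1$. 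This is shorter, makes transparent which structural property of the penalty is doing the work (the active branches never cross zero), and does not require $\Delta$ to admit a density, so it covers discrete symmetric noise that the paper's integration-by-parts argument formally excludes. The two checks you flag --- oddness and monotonicity of $\psi$, and integrability of $\varepsilon_0+\varepsilon_1$ --- are both immediate, so I see no gap.
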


\begin{proof}
Consider the deviation from $x=1$ to $r=0.$ This deviation matters only if $%
b_{1}(\varepsilon ,\beta )\neq 0.$ Incentive-compatibility thus\ requires
the following inequality to hold for all $\beta _{0},\beta _{1}$:%
\begin{equation*}
\mathbb{E}_{\varepsilon _{0},\varepsilon _{1}}[(b_{1}(\varepsilon ,\beta
))^{2}+2b_{1}(\varepsilon ,\beta )(b_{0}(\varepsilon ,\beta )-\beta
_{0}-\beta _{1})\mid b_{1}(\varepsilon ,\beta )\neq 0]\leq 0
\end{equation*}%
Plugging the expression for $b_{0}(\varepsilon )\ $given by (\ref{b(1)
single}), this inequality reduces to%
\begin{equation*}
\mathbb{E}_{\varepsilon _{0},\varepsilon _{1}}[b_{1}(\varepsilon ,\beta
)(-\beta _{1}+\varepsilon _{0}+\varepsilon _{1})\mid b_{1}(\varepsilon
,\beta )\neq 0]\leq 0
\end{equation*}

Fix $b_{1}(\varepsilon ,\beta )$ at some value $b_{1}^{\ast }\neq 0$. Define 
$\mathcal{E}(b^{\ast })=\{(\varepsilon _{0},\varepsilon
_{1}):b_{1}(\varepsilon ,\beta )=b_{1}^{\ast }\}.$ Suppose $\mathcal{E}%
(b^{\ast })$ is non-empty. Then, $(u,v)\in \mathcal{E}(b_{1}^{\ast })$
implies that $(-v,-u)\in \mathcal{E}(b^{\ast })$. This follows immediately
from the fact that $b_{1}(\varepsilon ,\beta )\ $is defined by the
difference $\varepsilon _{1}-\varepsilon _{0}$. Because $\varepsilon _{0}$
and $\varepsilon _{1}$ are $i.i.d$ and symmetrically distributed around
zero, the realizations $(u,v)$ and $(-v,-u)$ have the same probability. This
implies that for any given $b_{1}^{\ast }\neq 0,$%
\begin{equation*}
\mathbb{E}_{\varepsilon _{0},\varepsilon _{1}}[b_{1}(\varepsilon ,\beta
)(\varepsilon _{0}+\varepsilon _{1})|b_{1}(\varepsilon ,\beta )=b_{1}^{\ast
}]=0
\end{equation*}%
Therefore, showing that the deviation from $x=1$ to $r=0$ is unprofitable
reduces to showing that%
\begin{equation*}
\beta _{1}\mathbb{E}_{\varepsilon _{0},\varepsilon _{1}}[b_{1}(\varepsilon
,\beta )\mid b_{1}(\varepsilon ,\beta )\neq 0]\geq 0
\end{equation*}%
which simplifies further to%
\begin{equation*}
\beta _{1}\mathbb{E}_{\varepsilon _{0},\varepsilon _{1}}(b_{1}(\varepsilon
,\beta ))\geq 0
\end{equation*}

Suppose without loss of generality that $\beta _{1}>0$. We will show that $%
\mathbb{E}_{\varepsilon _{0},\varepsilon _{1}}(b_{1}(\varepsilon ,\beta
))\geq 0$. Denote $\Delta =\varepsilon _{1}-\varepsilon _{0}$. Let $G$ and $g
$ denote the $cdf$ and density of $\Delta $. Since $\varepsilon _{0}$ and $%
\varepsilon _{1}$ are symmetrically distributed around zero, $g$ is
symmetric. Denote 
\begin{equation*}
c^{\ast }=\sqrt{(c_{1})^{2}+2c_{0}}
\end{equation*}%
We need to show that%
\begin{equation}
\int_{-\infty }^{-c^{\ast }-\beta _{1}}(\beta _{1}+\Delta +c_{1})g(\Delta
)+\int_{c^{\ast }-\beta _{1}}^{\infty }(\beta _{1}+\Delta -c_{1})g(\Delta
)\geq 0  \label{inequality prop2}
\end{equation}%
Denote $t=\beta _{1}+c^{\ast }$, $s=\beta _{1}-c_{1}$, and observe that
because $c^{\ast }\geq c_{1}\geq 0$, $t+s>0$ and $t-s>0$. By the symmetry of 
$g$, (\ref{inequality prop2}) becomes%
\begin{equation}
=\int_{-\infty }^{-t}(t+\Delta )g(\Delta )+\int_{-s}^{\infty }(s+\Delta
)g(\Delta )=tG(-t)+sG(s)+\int_{s}^{t}\Delta g(\Delta )\geq 0
\label{inequality 2 prop 2}
\end{equation}%
Applying integration by parts and the symmetry of $g$, (\ref{inequality 2
prop 2}) becomes%
\begin{equation*}
\int_{-\infty }^{\infty }\Delta g(\Delta )+\int_{-\infty }^{s}G(\Delta
)-\int_{-\infty }^{-t}G(\Delta )\geq 0
\end{equation*}%
Since $\int_{-\infty }^{\infty }\Delta g(\Delta )=\mathbb{E}_{\varepsilon
_{0},\varepsilon _{1}}(\varepsilon _{1}-\varepsilon _{0})=0$, the inequality
we need to prove reduces to%
\begin{equation*}
\int_{-\infty }^{s}G(\Delta )-\int_{-\infty }^{-t}G(\Delta )\geq 0
\end{equation*}%
which holds because $s>-t$.

An analogous argument shows that deviation from $x=0$ to $r=1$ is
unprofitable.\medskip
\end{proof}

The intuition behind this result is that symmetric noise curbs the model
selection curse: although model selection implies that $b_{1}$ is a biased
estimate of $\beta _{1}$, the bias is too small to give the agent the
incentive to introduce the counter-bias that results from misreporting.

\section{Does the Curse Vanish with Large Samples?}

So far, we focused on a sample with two observations, hence, one may think
that the model selection curse is a small-sample phenomenon. In this section
we show that this need not be the case. Extend our model by assuming that
for each $x=0,1$, the statistician obtains $N$ observations of the form $%
y_{x}^{n}=f(x)+\varepsilon _{x}^{n}$, $n=1,...,N$, where $\varepsilon
_{x}^{n}$ is $i.i.d$ with mean zero across all $x,n$. The statistician's
problem is essentially the same:%
\begin{equation*}
\min_{b_{0},b_{1}}\sum_{x=0,1}\dsum%
\limits_{n=1}^{N}(y_{x}^{n}-b_{0}-b_{1}x_{k}^{n})^{2}+N\left( c_{0}\mathbf{1}%
_{b_{1}\neq 0}+c_{1}|b_{1}|\right) 
\end{equation*}%
The entire model and its analysis are unchanged, except that now $%
\varepsilon =(\varepsilon _{0}^{n},\varepsilon _{1}^{n})_{n=1,...,N}$; and
in the solution for the estimator (\ref{b(1) single}), $\varepsilon _{x}$ is
replaced with the average sample noise $\bar{\varepsilon}_{x}=\frac{1}{N}%
\sum_{i=1}^{n}\varepsilon _{x}^{n}$. Denote $\varepsilon =(\varepsilon
_{x}^{n})_{x=0,1;n=1,...,N}$.

Returning to the \textit{Bernoulli-noise example} from the previous section,
we investigate whether the set of parameters that violate incentive
compatibility vanishes as $N\rightarrow \infty $. We continue to assume $%
c_{1}=0$ and restrict attention to the case of $\beta _{1}>0$ - both are
without loss of generality. Suppose that for every $x=0,1$ and every
observation $n=1,...,N$, $\varepsilon _{x}^{n}$ is independently drawn from
the Bernoulli distribution that assigns probability $p>\frac{1}{2}$ to $-1$
and probability $1-p$ to $d=p/(1-p)$. Let $\bar{\varepsilon}_{x}(N)$ denote
the average noise realization over all the $N$ observations for $x\in \{0,1\}
$. The pivotal event $\{\varepsilon \mid b_{1}(\varepsilon ,\beta )\neq 0\}$
can be written as%
\begin{equation}
\left\{ \varepsilon \mid \bar{\varepsilon}_{1}(N)-\bar{\varepsilon}%
_{0}(N)\notin \left( -\sqrt{2c_{0}}-\beta _{1},\sqrt{2c_{0}}-\beta
_{1}\right) \right\}   \label{pivotal event asymptotic}
\end{equation}%
Our goal is find the set of parameters for which incentive-compatibility is
violated in the $N\rightarrow \infty $ limit.\medskip 

\begin{proposition}
\label{prop_largedeviation}The set of parameters $\beta _{1}>0$ and $c_{0},d$
for which incentive-compatibility is violated in the $N\rightarrow \infty $
limit is given by%
\begin{equation}
\beta _{1}<\frac{c_{0}}{\sqrt{2c_{0}}+\frac{2d}{d-1}}
\label{ICviolation_largeN}
\end{equation}
\end{proposition}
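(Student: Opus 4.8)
The plan is to reduce the incentive-compatibility violation to a single limiting inequality, and then evaluate that inequality by a large-deviation analysis of the pivotal event. First I would record that, exactly as in the single-observation Bernoulli example (and using $c_1=0$, so $c^\ast=\sqrt{2c_0}$ and $b_1=\beta_1+\bar{\varepsilon}_1(N)-\bar{\varepsilon}_0(N)$ on the pivotal event), the deviation from $x=1$ to $r=0$ is profitable precisely when
\[
\mathbb{E}\big[-\beta_1^2+2\beta_1\bar{\varepsilon}_0(N)+\bar{\varepsilon}_1(N)^2-\bar{\varepsilon}_0(N)^2\ \big|\ b_1(\varepsilon,\beta)\neq 0\big]>0 ,
\]
now with the sample averages $\bar{\varepsilon}_x(N)$ in place of the single draws and the pivotal event as in (\ref{pivotal event asymptotic}). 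The task is therefore to compute the $N\to\infty$ limit of this conditional expectation and determine when it is positive.

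Next I would argue that, throughout the candidate region (where one checks $\beta_1<\sqrt{2c_0}$), the pivotal event is a genuine large-deviation event, so that Cram\'er's theorem applies to the i.i.d. Bernoulli noise. Writing $\Delta=\bar{\varepsilon}_1(N)-\bar{\varepsilon}_0(N)$, the law of $\Delta$ is symmetric about $0$ (since $\bar{\varepsilon}_0(N)$ and $\bar{\varepsilon}_1(N)$ are i.i.d.) and its rate function is increasing in $|\Delta|$; because $0<\sqrt{2c_0}-\beta_1<\sqrt{2c_0}+\beta_1$, the positive tail $\{\Delta\geq\sqrt{2c_0}-\beta_1\}$ is exponentially more likely than the negative tail, so conditioning on the pivotal event is asymptotically the same as conditioning on the positive tail. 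By the Gibbs conditioning principle the conditional law of $(\bar{\varepsilon}_1(N),\bar{\varepsilon}_0(N))$ then concentrates at the unique minimizer $(\bar{\varepsilon}_1^\ast,\bar{\varepsilon}_0^\ast)$ of $I(a)+I(b)$ subject to $a-b=\sqrt{2c_0}-\beta_1$, where $I$ is the Cram\'er rate function of a single $\varepsilon_x$. Since the noise is bounded, bounded convergence lets me pass the limit inside, so the limiting conditional expectation equals the integrand evaluated at $(\bar{\varepsilon}_1^\ast,\bar{\varepsilon}_0^\ast)$.

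I would then simplify. Using the constraint $\bar{\varepsilon}_1^\ast-\bar{\varepsilon}_0^\ast=\sqrt{2c_0}-\beta_1$, the limiting integrand collapses to $\sqrt{2c_0}\,(2\bar{\varepsilon}_0^\ast+\sqrt{2c_0}-2\beta_1)$, so violation is equivalent to $\bar{\varepsilon}_0^\ast>\beta_1-\tfrac12\sqrt{2c_0}$; the boundary is $\bar{\varepsilon}_0^\ast=\beta_1-\tfrac12\sqrt{2c_0}$, which together with the constraint forces $\bar{\varepsilon}_1^\ast=\tfrac12\sqrt{2c_0}$. To pin down the minimizer I would use the first-order (exponential-tilting) condition $I'(\bar{\varepsilon}_1^\ast)+I'(\bar{\varepsilon}_0^\ast)=0$. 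Parametrizing each limiting average by the fraction $q_x$ of high realizations, $\bar{\varepsilon}_x^\ast=q_x(d+1)-1$, this condition reads $\tfrac{q_1 q_0}{(1-q_1)(1-q_0)}=\big(\tfrac{1-p}{p}\big)^2=\tfrac1{d^2}$. Substituting the boundary values of $\bar{\varepsilon}_1^\ast,\bar{\varepsilon}_0^\ast$ (hence of $q_1,q_0$) turns this into an equation that is linear in $\beta_1$; solving and simplifying yields exactly (\ref{ICviolation_largeN}).

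The hard part will be the concentration step: justifying rigorously that the conditional expectation converges to the integrand evaluated at the single point $(\bar{\varepsilon}_1^\ast,\bar{\varepsilon}_0^\ast)$. This needs (i) that the pivotal event is truly a large deviation in the relevant parameter range and that the positive tail dominates, (ii) uniqueness of the constrained minimizer, which follows from strict convexity of $I$, and (iii) interchange of limit and conditional expectation, delivered by boundedness of the Bernoulli noise. A secondary point is the direction of the final inequality, which I would settle by checking that $\bar{\varepsilon}_0^\ast-(\beta_1-\tfrac12\sqrt{2c_0})$ is decreasing in $\beta_1$ (as $\beta_1$ rises the required tilt weakens and $\bar{\varepsilon}_0^\ast$ increases more slowly than $\beta_1$), so that incentive-compatibility fails precisely for $\beta_1$ below the boundary (\ref{ICviolation_largeN}).
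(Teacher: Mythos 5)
Your proposal is correct and follows essentially the same route as the paper: both identify the dominant (positive) tail of the pivotal event, invoke the conditional limit theorem / Gibbs conditioning principle to concentrate the conditional law on the entropy-minimizing point of the constraint set, and evaluate the IC inequality there, and your tilting condition $\tfrac{q_1 q_0}{(1-q_1)(1-q_0)}=1/d^2$ is exactly the paper's Lagrangian conditions $s_{d,-1}s_{-1,d}=s_{d,d}s_{-1,-1}$ and $s_{-1,-1}=d^2 s_{d,d}$ rewritten in terms of the marginals. The only cosmetic differences are that you use the two marginal sample means with Cram\'er rate functions where the paper uses the joint empirical type with Sanov's theorem, and that you locate the threshold by substituting the boundary values $\bar{\varepsilon}_1^{\ast}=\tfrac{1}{2}\sqrt{2c_0}$, $\bar{\varepsilon}_0^{\ast}=\beta_1-\tfrac{1}{2}\sqrt{2c_0}$ into the tilting condition rather than solving explicitly for the conditional limit as the paper does; both yield (\ref{ICviolation_largeN}).
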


\begin{proof}
We first find the limit distribution over $(\bar{\varepsilon}_{0}(N),\bar{%
\varepsilon}_{1}(N))$, conditional on the event (\ref{pivotal event
asymptotic}). To do this, it helps to combine the two samples $(\varepsilon
_{0}^{1},...,\varepsilon _{0}^{N})$ and $(\varepsilon
_{1}^{1},...,\varepsilon _{1}^{N})$ into one composite sample $(\eta
^{1},....,\eta ^{N})$, such that for every $n$, $\eta ^{n}=(\varepsilon
_{1}^{n},\varepsilon _{0}^{n})$. Thus, $\eta ^{n}$ is drawn $i.i.d$
according to the following distribution $\pi $:%
\begin{eqnarray*}
\pi _{-1,-1} &=&\Pr (-1,-1)=p^{2} \\
\pi _{-1,d} &=&\Pr (-1,d)=p(1-p)=\Pr (d,-1)=\pi _{d,-1} \\
\pi _{d,d} &=&\Pr (d,d)=(1-p)^{2}
\end{eqnarray*}%
Denoting by $s_{i,j}$ the empirical frequency of the realization $(i,j)$ in
this composite sample allows us to redefine the pivotal event in terms of a
subset of empirical frequencies $s=(s_{-1,-1},s_{-1,d},s_{d,-1},s_{d,d})$:%
\begin{equation*}
R^{N}=\left\{ s^{N}\mid (s_{d,-1}-s_{-1,d})\notin \left( \frac{-\sqrt{2c_{0}}%
-\beta _{1}}{d+1},\frac{\sqrt{2c_{0}}-\beta _{1}}{d+1}\right) \right\}
\end{equation*}

For any empirical distribution $s$, let $D(s||\pi )$ the relative entropy of 
$s$ with respect to $\pi $:%
\begin{equation}
D(s||\pi )=\sum_{i,j\in \{-1,d\}}s_{i,j}\ln \left( \frac{s_{i,j}}{\pi _{i,j}}%
\right)   \label{KL}
\end{equation}%
Denote 
\begin{equation*}
\theta _{l}=\frac{-\sqrt{2c_{0}}-\beta _{1}}{d+1}\qquad \theta _{h}=\frac{%
\sqrt{2c_{0}}-\beta _{1}}{d+1}
\end{equation*}%
We will now show that in the $N\rightarrow \infty $ limit, the distribution
over $s^{N}$ conditional on $s^{N}\in R^{N}$ assigns probability one to the
unique $s$ that minimizes $D(s||\pi )$ subject to the constraint $%
s_{d,-1}-s_{-1,d}=\theta _{h}$. Recall that we are restricting attention to
a range of parameters such that $-1<\theta _{l}<\theta _{h}<1$. We can
partition the pivotal event $R^{N}$ into two closed intervals: $[-1,\theta
_{l}]$ and $[\theta _{h},1]$. Because $\beta _{1}>0$, $\left\vert \theta
_{l}\right\vert <\left\vert \theta _{h}\right\vert $.

The relative entropy function $D(s||\pi )$ is strictly convex in $s$ and
attains a unique unconstrained minimum of zero at $s=\pi $. Furthermore,
because $\pi _{-1,d}=\pi _{d,-1}$, $D(s||\pi )$ treats $s_{-1,d}$ and $%
s_{-d,1}$ symmetrically. Therefore, for any $\theta \in \lbrack -1,1]$, the
minimum of $D(s||\pi )$ subject to $s_{-1,d}-s_{-d,1}=\theta $ is equal to
the minimum of $D(s||\pi )$ subject to $s_{d,-1}-s_{-1,d}=\theta $, such
that the minimum of $D(s||\pi )$ subject to $s_{d,-1}-s_{-1,d}=\theta $ is
strictly increasing with $\left\vert \theta \right\vert $. Therefore, the
minimum of $D(s||\pi )$ subject to $s_{d,-1}-s_{-1,d}\in \lbrack \theta
_{h},1]$ is strictly below the minimum of $D(s||\pi )$ subject to $%
s_{d,-1}-s_{-1,d}\in \lbrack -1,\theta _{l}]$. By Sanov's Theorem (see
Theorem 11.4.1 in Cover and Thomas (2006, p. 362)), the probability of the
event $[\theta _{h},1]$ is arbitrarily higher than the probability of the
event $[-1,\theta _{l}]$ as $N\rightarrow \infty $. Therefore, we can take
the pivotal event to be $[\theta _{h},1]$. Furthermore, by the conditional
limit theorem (Theorem 11.6.2 in Cover and Thomas (2006, p. 371)), in the $%
N\rightarrow \infty $ limit, the probability that $s_{d,-1}-s_{-1,d}=\theta
_{h}$ conditional on the event $s_{d,-1}-s_{-1,d}\in \lbrack \theta _{h},1]$
is one.

It follows that the objective function is $D(s||\pi )$ and the constraints
are%
\begin{eqnarray*}
s_{d,-1}-s_{-1,d} &=&\frac{\sqrt{2c_{0}}-\beta _{1}}{d+1} \\
s_{-1,-1}+s_{-1,d}+s_{d,-1}+s_{d,d} &=&1
\end{eqnarray*}%
Writing down the Lagrangian, the first-order conditions with respect to $%
(s_{i,j})$ are ($\lambda _{1}$ and $\lambda _{2}$ are the multipliers of the
first and second constraints):%
\begin{eqnarray*}
1+\ln s_{-1,-1}-\ln p^{2}-\lambda _{2} &=&0 \\
1+\ln s_{d,d}-\ln (1-p)^{2}-\lambda _{2} &=&0 \\
1+\ln s_{d,-1}-\ln p(1-p)-\lambda _{1}-\lambda _{2} &=&0 \\
1+\ln s_{-1,d}-\ln p(1-p)+\lambda _{1}-\lambda _{2} &=&0
\end{eqnarray*}%
These equations imply%
\begin{eqnarray*}
s_{d,-1}s_{-1,d} &=&s_{d,d}s_{-1,-1} \\
s_{-1,-1} &=&d^{2}s_{d,d}
\end{eqnarray*}%
Now, since%
\begin{eqnarray*}
d &=&\frac{p}{1-p} \\
\bar{\varepsilon}_{1} &=&(s_{d,-1}+s_{d,d})(d+1)-1 \\
\bar{\varepsilon}_{0} &=&(s_{-1,d}+s_{d,d})(d+1)-1
\end{eqnarray*}%
we have that in the $N\rightarrow \infty $ limit, the distribution over $%
\varepsilon $ conditional on the pivotal event assigns probability one to%
\begin{eqnarray*}
\bar{\varepsilon}_{0} &=&-\frac{1}{2}(\sqrt{2c_{0}}-\beta _{1})-\frac{d}{d-1}%
+\frac{1}{2}\sqrt{(\sqrt{2c_{0}}-\beta _{1})^{2}+\frac{4d^{2}}{(d-1)^{2}}} \\
\bar{\varepsilon}_{1} &=&\frac{1}{2}(\sqrt{2c_{0}}-\beta _{1})-\frac{d}{d-1}+%
\frac{1}{2}\sqrt{(\sqrt{2c_{0}}-\beta _{1})^{2}+\frac{4d^{2}}{(d-1)^{2}}}
\end{eqnarray*}%
Plugging these values into (\ref{bernoulli_IC})) produces the
result.\medskip 
\end{proof}

Thus, the incentive-compatibility problem in the Bernoulli-noise example
does not vanish when the sample is large. Moreover, the more skewed the
underlying noise distribution and the larger the complexity cost, the larger
the set of prior beliefs for which incentive-compatibility is violated in
the $N\rightarrow \infty $ limit. The reason that large samples do not fix
the incentive-compatibility problem is that the agent's reasoning hinges on
the pivotal event in which the variable is included. Therefore, even if the
estimator's unconditional distribution is asymptotically well-behaved, the
relevant question for incentive-compatibility is whether it is well-behaved 
\textit{conditional on the pivotal event}.

Recall that our original assumption of only two observations captured (in a
highly stylized fashion) the idea that model selection can avert
over-fitting. When we continue to assume a single explanatory variable and
raise $N$, the over-fitting problem is attenuated and the role of model
selection diminishes. Indeed, practitioners of penalized regression adjust
penalty parameters to sample size, such that $c_{0},c_{1}\rightarrow 0$ as $%
N\rightarrow \infty $. The key question is therefore whether the $rate$ by
which $c_{0}$ or $c_{1}$ decrease with $N$ is \textit{fast enough} to
outweigh the model selection curse. To answer this question, one needs to
characterize the condition for incentive-compatibility for arbitrary values
of $N,c_{0},c_{1}$. This is an open question that we leave for future work.

\section{Conclusion}

Interactions between humans and machines that follow statistical procedures
are becoming ubiquitous, giving rise to interesting questions for
economists. Our question is whether human decision makers should act
cooperatively toward a machine that employs a non-Bayesian statistical
procedure that aims at good predictions. We demonstrated, via a toy example,
that the element of model selection in the procedure creates non-trivial
incentive issues.

Our little exercise exposed a major methodological challenge. The standard
economic model of interactive decision making is based on the Bayesian,
common-prior paradigm. However, the actual behavior of machine decision
makers is often hard to reconcile with this paradigm. We addressed this
challenge by examining the agent's response to a $fixed$ statistical
procedure with a given specification of its parameters. One would like to 
\textit{endogenize} these choices. However, given that the procedure is
fundamentally non-Bayesian, capturing this endogenization with a
well-defined ex-ante optimization problem is not obvious. Incorporating
incentive-compatibility as a criterion for selecting prediction methods is
therefore conceptually challenging.

In general, modeling strategic interactions that involve machine learning
requires us to depart from the conventional Bayesian framework, toward an
approach that admits decision makers who act as non-Bayesian statisticians.
Such approaches are familiar to us from the bounded rationality literature
(e.g., Osborne and Rubinstein (1998), Spiegler (2006), Cherry and Salant
(2016) and Liang (2018)). Further study of human-machine interactions is
likely to generate new ideas for modeling interactions that involve
boundedly rational $human$ decision makers.\bigskip

\noindent {\LARGE Appendix: Proof of Proposition \textbf{\ref{prop_estimator}%
\medskip }}

\noindent Fix the realization of sample noise $\varepsilon $. The
coefficients $b_{0}$ and $b_{1}$ are given by the solution to the
first-order conditions of%
\begin{equation*}
\min_{b_{0},b_{1}}\qquad \sum_{x=0,1}(y_{x}-b_{0}-b_{1}x)^{2}+c_{0}\mathbf{1}%
_{b_{1}\neq 0}+c_{1}|b_{1}|
\end{equation*}%
where the dependence of the coefficients $b_{0}$ and $b_{1}$ on the noise
realization $\varepsilon $ is suppressed for notational ease.

The first-order condition with respect to $b_{0}$ is%
\begin{equation}
(y_{0}-b_{0})+(y_{1}-b_{0}-b_{1})=0  \label{FOC b(0) multi}
\end{equation}%
while the first-order condition with respect to $b_{1}$ when $b_{1}\neq 0$
gives%
\begin{equation}
2(y_{1}-b_{0}-b_{1})=sign(b_{1})c_{1}  \label{FOC b(j) multi}
\end{equation}%
In particular, $2(y_{1}-b_{0}-b_{1})=c_{1}$ when $b_{1}>0$, and $%
2(y_{1}-b_{0}-b_{1})=-c_{1}$ when $b_{1}<0$.

From (\ref{FOC b(0) multi}) we obtain%
\begin{equation*}
b_{0}=\frac{1}{2}(y_{0}+y_{1}-b_{1})
\end{equation*}%
Plugging this into (\ref{FOC b(j) multi}), we obtain the following
characterization of $b_{1}$ conditional on it being non-zero:%
\begin{equation*}
b_{1}=\left\{ 
\begin{array}{ccc}
\beta _{1}+\varepsilon _{1}-\varepsilon _{0}-c_{1} & if & b_{1}>0 \\ 
\beta _{1}+\varepsilon _{1}-\varepsilon _{0}+c_{1} & if & b_{1}<0%
\end{array}%
\right. 
\end{equation*}%
This means in particular that when $\beta _{1}+\varepsilon _{1}-\varepsilon
_{0}\in (-c_{1},c_{1})$, $b_{1}=0$.

To complete the characterization of when $b_{1}\neq 0$, we compute the
difference between the Residual Sum of Squares (RSS) when the coefficient $%
b_{1}$ is admitted and when it is omitted. First,%
\begin{equation*}
RSS(b_{1}\neq 0)=\left( b_{0}-y_{0}\right) ^{2}+\left(
b_{0}+b_{1}-y_{1}\right) ^{2}
\end{equation*}%
where $b_{0}$ and $b_{1}$ are given by (\ref{FOC b(0) multi}) and (\ref{FOC
b(j) multi}). In contrast, when $b_{1}$ is omitted, $b_{0}=\frac{1}{2}%
(y_{0}+y_{1})$, such that%
\begin{equation*}
RSS(b_{1}=0)=\left( \frac{1}{2}y_{0}+\frac{1}{2}y_{1}-y_{0}\right)
^{2}+\left( \frac{1}{2}y_{0}+\frac{1}{2}y_{1}-y_{1}\right) ^{2}=\frac{1}{2}%
(y_{1}-y_{0})^{2}
\end{equation*}%
It follows that%
\begin{eqnarray*}
RSS(b_{1} &=&0)-RSS(b_{1}\neq 0)=\frac{1}{2}(y_{1}-y_{0})^{2}-\left(
b_{0}-y_{0}\right) ^{2}-\left( b_{0}+b_{1}-y_{1}\right) ^{2} \\
&=&b_{1}[y_{1}-y_{0}-\frac{1}{2}b_{1}] \\
&=&[y_{1}-y_{0}-sign(b_{1})c_{1}][y_{1}-y_{0}-\frac{1}{2}%
(y_{1}-y_{0}-sign(b_{1})c_{1})] \\
&=&\frac{1}{2}(y_{1}-y_{0})^{2}-\frac{1}{2}(c_{1})^{2} \\
&=&\frac{1}{2}(\beta _{1}+\varepsilon _{1}-\varepsilon _{0})^{2}-\frac{1}{2}%
(c_{1})^{2}
\end{eqnarray*}%
The condition for $b_{1}\neq 0$ is%
\begin{equation*}
RSS(b_{1}=0)-RSS(b_{1}\neq 0)\geq c_{0}
\end{equation*}%
i.e.%
\begin{equation*}
(\beta _{1}+\varepsilon _{1}-\varepsilon _{0})^{2}\geq (c_{1})^{2}+2c_{0}
\end{equation*}%
This concludes the proof.


\begin{thebibliography}{99}
\bibitem{} Banerjee, A., S. Chassang, S. Montero and E. Snowberg (2017).
\textquotedblleft A Theory of Experimenters,\textquotedblright\ NBER Working
Paper No. 23867.

\bibitem{} Caragiannis, I, Ariel D. Procaccia and N. Shah (2016):\
\textquotedblleft Truthful Univariate Estimators,\textquotedblright\ \textit{%
Proceedings of the 33rd International Conference on Machine Learning} 
\textbf{48}.

\bibitem{} Chassang, S., P. Miquel and E. Snowberg (2012). \textquotedblleft
Selective trials: A Principal-Agent Approach to Randomized Controlled
Experiments,\textquotedblright\ \textit{American Economic Review} \textbf{102%
}, 1279-1309.

\bibitem{} Cherry, J. and Y. Salant (2006). \textquotedblleft Statistical
Inference in Games,\textquotedblright\ Northwestern University Working Paper.

\bibitem{} Cover, T. and J. Thomas (2006). \textit{Elements of Information
Theory}, Second Edition, Wiley.

\bibitem{} Cummings, R., S. Ioannidis and K. Ligett (2015).
\textquotedblleft Truthful Linear Regression,\textquotedblright\ \textit{%
Conference on Learning Theory}, 448-483.

\bibitem{} Eliaz, K. and R. Spiegler (2018). \textquotedblleft
Incentive-Compatible Estimators,\textquotedblright\ Tel-Aviv University
Working Paper.

\bibitem{} Feddersen, T. and W. Pesendorfer (1996). \textquotedblleft The
Swing Voter's Curse,\textquotedblright\ \textit{American Economic Review} 
\textbf{86}, 408-424.

\bibitem{} Gao, C., van der Vaart, A. and H. Zhou (2015). \textquotedblleft
A General Framework for Bayes Structured Linear Models,\textquotedblright\
arXiv preprint arXiv:1506.02174.

\bibitem{} Hardt, M., N. Megiddo, C. Papadimitriou and J. Wootters (2016):\
\textquotedblleft Strategic Classification,\textquotedblright\ \textit{%
Proceedings of the 2016 ACM Conference on Innovations in Theoretical
Computer Science}, 111-122

\bibitem{} Hastie, T., R. Tibshirani and M. Wainwright (2015). \textit{%
Statistical Learning with Sparsity: the LASSO and Generalizations}, CRC
press.

\bibitem{} Liang, A. (2018):\ \textquotedblleft Games of Incomplete
Information Played by Statisticians,\textquotedblright\ University of
Pennsylvania Working Paper.

\bibitem{} Milgrom, P. and R. Weber (1982). \textquotedblleft A Theory of
Auctions and Competitive Bidding,\textquotedblright\ \textit{Econometrica }%
\textbf{50}, 1089-1122.

\bibitem{} Osborne, M. and A. Rubinstein (1998). \textquotedblleft Games
with Procedurally Rational Players,\textquotedblleft\ \textit{American
Economic Review} 88, 834-847.

\bibitem{} Park, T. and G. Casella (2008). \textquotedblleft The Bayesian
Lasso,\textquotedblright\ \textit{Journal of the American Statistical
Association} \textbf{103}, 681-686.

\bibitem{} Spiegler, R. (2006): \textquotedblleft The Market for
Quacks,\textquotedblright\ \textit{Review of Economic Studies} \textbf{73},
1113-1131.

\bibitem{} Spiess, J. (2018). \textquotedblleft Optimal Estimation when
Researcher and Social Preferences are Misaligned,\textquotedblright\ Harvard
University Working Paper.

\bibitem{} Tibshirani, R. (1996). \textquotedblleft Regression Shrinkage and
Selection via the Lasso,\textquotedblright\ \textit{Journal of the Royal
Statistical Society}, Series B (Methodological), 267-288.\pagebreak \bigskip
\bigskip 
\end{thebibliography}
\end{document}